\newtheorem{fact}{Fact}
\newtheorem{thm}{Theorem}
\newtheorem{defn}{Definition}
\newtheorem{assump}{Assumption}
\newtheorem{rem}{Remark}
\newtheorem{mech}{Mechanism}
\newtheorem{ex}{Example}
\newtheorem{lem}{Lemma}
\newcommand{\norm}[1]{\left\lVert{#1}\right\rVert}
\newcommand{\abs}[1]{\left\lvert{#1}\right\rvert}
\newcommand{\lra}{\longrightarrow}
\newcommand{\Let}{\coloneqq}
\newcommand{\pmat}[1]{\begin{pmatrix}#1\end{pmatrix}}
\renewcommand{\geq}{\geqslant}
\renewcommand{\leq}{\leqslant}
\newcommand{\R}{\mathbb{R}}
\newcommand{\N}{\mathbb{N}}
\renewcommand{\P}{\mathcal{P}}
\newcommand{\K}{\mathcal{K}}
\newcommand{\KL}{\mathcal{KL}}
\newcommand{\Kinfty}{\mathcal{K}_{\infty}}
\newcommand{\Ntsigma}{\mathrm{N}_{\sigma}(0,t)} 
\title[]{A graph theoretic approach to input-to-state stability of switched systems}
\author[A.\ Kundu]{Atreyee Kundu}
\address{Control Systems Technology Group, Department of Mechanical Engineering, Eindhoven University of Technology, 5600MB Eindhoven, The Netherlands. }
\email[A.\ Kundu]{a.kundu@tue.nl}
\author[D.\ Chatterjee]{Debasish Chatterjee}
\address{Systems \& Control Engineering, Indian Institute of Technology Bombay, Mumbai~--~400076, India}
\email[D.\ Chatterjee]{dchatter@iitb.ac.in}
\urladdr[D.\ Chatterjee]{{http://www.sc.iitb.ac.in/~chatterjee}}
\keywords{discrete-time switched systems, weighted digraphs, algorithmic synthesis, input-to-state stability}
\date{\today}
\begin{document}



\begin{abstract}
    This article deals with input-to-state stability (ISS) of discrete-time switched systems. Given a family of nonlinear systems with exogenous inputs, we present a class of switching signals under which the resulting switched system is ISS. We allow non-ISS systems in the family and our analysis involves graph-theoretic arguments. A weighted digraph is associated to the switched system, and a switching signal is expressed as an infinite walk on this digraph, both in a natural way. Our class of stabilizing switching signals (infinite walks) is periodic in nature and affords simple algorithmic construction.
\end{abstract}

%
%
  \maketitle
\section{Introduction}
\label{s:intro}
    A \emph{switched system} comprises of two components --- a family of systems and a switching signal. The \emph{switching signal} selects an \emph{active} subsystem at every instant of time, i.e., the system from the family that is currently being followed \cite[\S1.1.2]{Liberzon}. In this article we study ISS of discrete-time switched systems under \emph{constrained switching} \cite[Chapter 3]{Liberzon}. More specifically, given a family of discrete-time systems with exogenous inputs such that not all systems in the family are ISS, we are interested in identifying a class of switching signals under which the resulting switched system is ISS.

    For a given family of discrete-time systems, in \cite{knc_hscc14} we proposed a class of switching signals under which the resulting switched system is globally asymptotically stable (GAS). We admitted unstable subsystems and our stabilizing condition involved \emph{only} certain asymptotic properties of the switching signals. Although the said result was presented in the context of switched linear systems for simplicity, it extends readily to the nonlinear setting under standard assumptions.

    Algorithmic synthesis of the class of stabilizing switching signals presented in \cite{knc_hscc14} was studied in \cite{knc_hscc14, kbnc_arxiv14}. A weighted digraph was associated to the given family of systems and the admissible transitions, and the switching signal was expressed as an infinite walk on the above digraph. In this setting, given a family of systems, algorithmic construction of a stabilizing switching signal is identical to: given the underlying weighted digraph of a switched system, algorithmic construction of an infinite walk that satisfies a certain pre-specified condition. However, algorithmically constructing an infinite walk on a given weighted digraph, that satisfies a pre-specified condition involving vertex and edge weights is an infeasible problem (because an algorithm should terminate in finite time). As a natural alternative in \cite{knc_hscc14, kbnc_arxiv14} we chose to construct the desired infinite walk by repeating a suitable closed walk. More specifically, we used periodic construction of infinite walks that correspond to stabilizing switching signals.

    Given a family of discrete-time nonlinear systems with exogenous inputs, in this article we extend the above periodic construction of infinite walks to ISS of the resulting switched system. In particular, the main features of our result are twofold:
    \begin{itemize}[label = $\circ$, leftmargin = *]
    	\item Firstly, we allow non-ISS systems in the family, and
	\item Secondly, our class of switching signals affords a simple algorithmic construction.
    \end{itemize}
        
        In general, given a family of systems, algorithmically constructing a switching signal that obeys point-wise constraints on the number of switches and the duration of activation of subsystems \cite[Chapter 3]{Liberzon}, \cite{HespanhaMorse, chatterjee07, BaiISS11, Liberzon_IOSS, kcnl_arxiv15} is not an easy task because the stabilizing conditions need to be checked for \emph{every} interval of time. In \cite{Mitra_ADT} the authors proposed methods for verifying (checking) average dwell time by expressing the switching signal as an infinite execution of a hybrid automaton. In this article we opt for switching signals that are of periodic nature; this periodicity ensures simpler algorithmic construction as compared to switching signals with point-wise constraints.

        Observe that a periodic switching signal in the discrete-time setting necessarily obeys an average dwell time condition. But unlike average dwell time switching, we do not impose separate point-wise constraints on the number of switches and the duration of activation of non-ISS subsystems \cite{Liberzon_IOSS}. Our stability condition solely relies on periodic validity of \emph{an} inequality involving certain parameters of the subsystems and the switching signal.
        
        We employ graph-theoretic arguments as the main apparatus for our analysis.\footnote{Digraphs have appeared before in the switched systems literature in \cite{Mancilladigraph, graph_dwell, Lazar14, Lee06, Langerak03, knc_hscc14, kbnc_arxiv14}.} Given the underlying weighted digraph of a switched system, our class of stabilizing switching signals correspond to infinite walks that admit periodic construction in terms of suitable closed walks. Consequently, the algorithmic construction of a stabilizing switching signal consists of two steps --- first, constructing a closed walk satisfying a pre-specified condition, and second, a mechanism to repeat the above closed walk indefinitely many times. We discuss standard graph-theoretic algorithms from the literature to execute the above algorithmic construction.

    The remainder of this article is organized as follows: In \S\ref{s:probstate} we formulate the problem under consideration, and catalog certain preliminaries which would be used in our analysis. Our main result appears in \S\ref{s:mainres}. We also discuss various features of our main result through a series of remarks in this section. We provide a numerical example in \S\ref{s:numex} and conclude in \S\ref{s:concln}. The proof of our main result appears in \S\ref{s:proofs}.
    
    {\bf Notation}: $\N$ is the set of natural numbers \(\{1, 2, \ldots\}\), $\N_{0} = \N\cup\{0\}$. We denote by $\norm{v}$ the standard Euclidean norm of a vector $v$, while $\norm{w}_{t} \Let \sup\{\norm{w(t)}:t\in\N_0\}$ denotes the supremum norm of a signal \(w\) taking values in some Euclidean space. For a walk $W$ on a digraph $G(V,E)$, $\abs{W}$ denotes the length of $W$.

\section{Problem Statement}
\label{s:probstate}
\subsection{The switched system}
\label{ss:swsys}
    We consider a family of discrete-time systems with exogenous inputs
    \begin{align}
    \label{e:family}
        x(t+1) = f_{i}(x(t),v(t)),\:\:x(0)\:\text{given},\:\:i\in\P,\:\:t\in\N_{0},
    \end{align}
    where $x(t)\in\R^{d}$ is the vector of states, and $v(t)\in\R^{m}$ is the vector of inputs at time $t$, $\P = \{1,2,\cdots,N\}$ is an index set. We assume that for each $i\in\P$, $\ker f_{i}(\cdot,0) = \{0\}$. Let $\sigma:\N_{0}\rightarrow\P$ be a switching signal that specifies, at every time $t$, the index of the active system from the family \eqref{e:family}. The \emph{discrete-time switched system} generated by the given family of systems \eqref{e:family} and the switching signal $\sigma$ is given by
    \begin{align}
    \label{e:swsys}
        x(t+1) = f_{\sigma(t)}(x(t),v(t)),\:\:x(0)\:\text{given},\:\:t\in\N_{0}.
    \end{align}
    Let $0 =: \tau_{0}<\tau_{1}<\cdots$ be the \emph{switching instants} of $\sigma$; these are the integers at which $\sigma$ jumps. We let $(x(t))_{t\in\N_{0}}$ denote the solution to the switched system \eqref{e:swsys}, where the dependence on $\sigma$ is suppressed for notational simplicity. We assume that there are no jumps in the states at the switching instants.


    \begin{defn}[{\cite[Definition 3.1]{Sontag_dtiss}}]
    \label{d:iss}
        The switched system \eqref{e:swsys} is input-to-state stable (ISS) for a given $\sigma$ if there exist functions $\beta\in\KL$ and $\gamma\in\K$ such that for all bounded inputs $v:\N_0\lra\R^m$ and $x(0)\in\R^{d}$, we have\footnote{We refer the reader to \cite[\S 4.4]{Khalil} for definitions of classes $\K$, $\Kinfty$, $\mathcal{L}$ and $\KL$ functions.}
        \begin{align}
        \label{e:iss}
            \norm{x(t)} \leq \beta(\norm{x(0)},t) + \gamma(\norm{v}_{t})\:\:\text{for all}\:\:t\in\N_{0}.
        \end{align}
        If no inputs are present, i.e., $v\equiv 0$, then \eqref{e:iss} reduces to global asymptotic stability (GAS) of \eqref{e:swsys}.
    \end{defn}

    Let $\P_{S}$ and $\P_{U}$ denote the sets of indices of ISS and non-ISS systems in family \eqref{e:family}, respectively, $\P = \P_{S}\sqcup\P_{U}$. Let the set $E(\P)$ consist of all pairs $(i,j)$ such that it is allowed to switch from system $i$ to system $j$, $i,j\in\P$. 
    \begin{assump}
    \label{a:lambdaj}
        For each $i\in\P$, there exist continuous functions $V_{i}:\R^{d}\lra[0,+\infty[$, class $\Kinfty$ functions $\underline{\alpha}$, $\overline{\alpha}$, class $\K$ function $\gamma$ and scalars $\lambda_{i}$ with $0<\lambda_{i}< 1$ for $i\in\P_{S}$ and $\lambda_{i} > 1$ for $i\in\P_{U}$ such that for all $\xi\in\R^{d}$ and $\eta\in\R^{m}$, we have
        \begin{align}
        \label{e:Lyapprop}
            \underline\alpha(\norm{\xi}) \leq V_{i}(\xi) \leq \overline\alpha(\norm{\xi}),
        \end{align}
        and
        \begin{align}
        \label{e:dLyapprop}
            V_{i}(f_{i}(\xi,\eta)) \leq \lambda_{i}V_{i}(\xi) + \gamma(\norm{\eta}),\:\:t\in\N_{0}.
        \end{align}
    \end{assump}
        The functions $(V_{i})_{i\in\P}$ satisfying conditions \eqref{e:Lyapprop} and \eqref{e:dLyapprop} are called the ISS-Lyapunov-like functions and are standard in the literature, see e.g., \cite{Sontag_dtiss}, \cite{Grune2014} for details regarding existence of such functions and their properties.
    \begin{assump}
    \label{a:muij}
         Whenever $(i,j)\in E(\P)$, there exist $\mu_{ij} > 0$ such that the ISS-Lyapunov-like functions are related as follows:
        \begin{align}
        \label{e:muijineq}
            V_{j}(\xi) \leq \mu_{ij}V_{i}(\xi)\:\:\text{for all}\:\:\xi\in\R^{d}.
        \end{align}
    \end{assump}
        The assumption of linearly comparable Lyapunov-like functions, i.e., there exists $\mu\geq 1$ such that
        \begin{align}
        \label{e:muineq}
            V_{j}(\xi) \leq \mu V_{i}(\xi)\:\:\text{for all}\:\:\xi\in\R^{d}\:\:\text{and all}\:\:i,j\in\P,
        \end{align}
        is standard in the theory of stability under average dwell time switching, see e.g., \cite{Zhai002}. Clearly, \eqref{e:muineq} is a special case of \eqref{e:muijineq}.

\subsection{The underlying weighted digraph}
\label{ss:wdigraph}
    We associate a weighted digraph $G(\P,$\\$E(\P))$ with the switched system \eqref{e:swsys} in the following manner:
    \begin{itemize}[label = $\circ$, leftmargin = *]
        \item The set of vertices is the set of indices $\P$.
        \item The set of edges $E(\P)$ consists of:
        \begin{itemize}[label = $\diamond$, leftmargin = *]
            \item a directed edge $(i,j)$ whenever it is allowed to switch from vertex (system) $i$ to vertex (system) $j$, $i,j\in\P$,
            \item a self-loop at vertex $j$ whenever it is allowed to dwell on vertex (system) $j$ for two or more consecutive time-steps.
        \end{itemize}
        \item The parameters $\abs{\ln\lambda_{j}}$'s, $j\in\P$ ($\grave{a}$ la Assumption \ref{a:lambdaj}) and $\ln\mu_{ij}$, $(i,j)\in E(\P)$ ($\grave{a}$ la Assumption \ref{a:muij}) are the vertex and edge weights of $G(\P,E(\P))$, respectively. It is evident that $\ln\mu_{jj} = 0$.
    \end{itemize}

    Recall that \cite[p.\ 4]{Bollobas} a \emph{walk} on a digraph $G(V,E)$ is an alternating sequence of vertices and edges, say $v_{0}, e_{1}, v_{1}, e_{2}, \cdots, e_{\ell}, v_{\ell}$, where $v_{i}\in V$, $e_{i} = (v_{i-1},v_{i})\in E$, $0 < i \leq \ell$. A walk is \emph{closed} if $v_{0} = v_{\ell}$. The length of a walk is its number of edges, counting repetitions, e.g., in the above case the length of the walk $W$ is $\ell$. In the sequel by the term \emph{infinite walk} we mean a walk of infinite length, i.e., it has infinitely many edges. We have the following:
    \begin{fact}[{\cite[Fact 3]{knc_hscc14}}]
    \label{fact:walk}
        The set of switching signals $\sigma: \N_{0}\lra\P$ and the set of infinite walks on $G(\P,E(\P))$ (defined as above) are in bijective correspondence.
    \end{fact}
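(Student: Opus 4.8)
The plan is to exhibit two explicit maps between the set of switching signals and the set of infinite walks on $G(\P,E(\P))$ and to show that they are mutually inverse. Given a switching signal $\sigma:\N_{0}\lra\P$, I would associate to it the infinite walk $W_{\sigma}$ whose vertex sequence is $v_{t}\Let\sigma(t)$ for $t\in\N_{0}$, and whose $(t+1)$-th edge is $e_{t+1}\Let(\sigma(t),\sigma(t+1))$. Conversely, given an infinite walk $v_{0},e_{1},v_{1},e_{2},\ldots$ on $G(\P,E(\P))$, I would associate the switching signal $\sigma_{W}(t)\Let v_{t}$. Since $\sigma$ is defined on all of $\N_{0}$ and an infinite walk has infinitely many edges, the lengths match, and the bulk of the argument is to check that both assignments land in the intended target sets and that composing them in either order yields the identity.

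The first point to verify is that $W_{\sigma}$ is indeed a walk on $G(\P,E(\P))$, i.e., that $e_{t+1}=(\sigma(t),\sigma(t+1))\in E(\P)$ for every $t$. Here I would split into two cases according to the construction of $E(\P)$. If $\sigma(t)\neq\sigma(t+1)$, then $\sigma$ switches from system $\sigma(t)$ to system $\sigma(t+1)$ at the instant $t+1$; for an admissible switching signal this transition is permitted, so the directed edge $(\sigma(t),\sigma(t+1))$ belongs to $E(\P)$. If instead $\sigma(t)=\sigma(t+1)\teL j$, then $\sigma$ dwells on $j$ across the two consecutive instants $t$ and $t+1$; by construction $G(\P,E(\P))$ carries a self-loop at $j$ precisely when such dwelling is allowed, so again $e_{t+1}\in E(\P)$. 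Thus $W_{\sigma}$ is a well-defined infinite walk, and the same case analysis run in reverse shows that $\sigma_{W}$ effects only permitted switches and dwells, hence is a legitimate switching signal.

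It remains to check that the two maps are mutually inverse. The composition $\sigma\mapsto W_{\sigma}\mapsto\sigma_{W_{\sigma}}$ returns the signal $t\mapsto v_{t}=\sigma(t)$, which is $\sigma$ itself. For the other composition the key observation is that $G(\P,E(\P))$ carries at most one edge between any ordered pair of vertices — a single directed edge $(i,j)$ for a switch and a single self-loop $(j,j)$ for a dwell — so an infinite walk is completely determined by its sequence of vertices. Consequently $W\mapsto\sigma_{W}\mapsto W_{\sigma_{W}}$ reconstructs the original walk edge-for-edge, giving $W_{\sigma_{W}}=W$, and the two maps are inverse bijections.

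I do not expect a deep obstacle here; the result is essentially a careful bookkeeping statement. The one place that demands genuine care is the alignment between the dwell threshold and the self-loop convention: a single activation of $j$ between two switches uses only the incoming and outgoing directed edges and no self-loop, whereas staying at $j$ for two or more consecutive instants requires traversing the self-loop at least once. Ensuring that this threshold matches exactly the condition ``$j$ may be dwelt upon for two or more consecutive time-steps'' under which the self-loop is included in $E(\P)$ is the crux of the well-definedness argument, and it is also what guarantees that the correspondence is a genuine bijection rather than merely a surjection.
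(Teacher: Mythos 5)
Your proposal is correct and is exactly the natural correspondence the paper intends: the paper does not prove this fact (it imports it by citation from \cite{knc_hscc14}), but the bijection you construct --- vertex sequence $v_t = \sigma(t)$, edge sequence $(\sigma(t),\sigma(t+1))$, with admissibility of switches and dwells matching membership of directed edges and self-loops in $E(\P)$ --- is precisely the correspondence the paper uses implicitly, e.g.\ in Example~\ref{ex:wgraph}. Your attention to the self-loop/dwell alignment and to the fact that the digraph is simple (so walks are determined by their vertex sequences) covers the only points where care is needed.
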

        Observe that since we are in the discrete-time setting, the association of time with the length of a walk is natural.
    \begin{ex}
    \label{ex:wgraph}
    \rm
        \begin{itemize}[label = $\circ$, leftmargin = *]
            \item Consider a family of systems $\P = \{1,2,3\}$. Let the following switches be admissible: $1\to 2$, $2\to 3$, $3\to 1$, and $3\to 2$. Let it also be allowed to dwell on systems $1$ and $3$ for two (or more) consecutive time steps. A possible choice of switching signal $\sigma$ is: $\sigma(0) = 1$, $\sigma(1) = 1$, $\sigma(2) = 2$, $\sigma(3) = 3$, $\sigma(4) = 2$, $\sigma(5) = 3$, $\sigma(6) = 3$, $\cdots$
            \item The underlying weighted digraph $G(\P,E(\P))$ of the above switched system is: $\P = \{1,2,3\}$, and $E(\P) = \{(1,1),(1,2),(2,3),(3,1),(3,2),(3,3)\}$. The quantities $\abs{\ln\lambda_{1}}$, $\abs{\ln\lambda_{2}}$, $\abs{\ln\lambda_{3}}$ and $\ln{\mu_{11}}$, $\ln{\mu_{12}}$, $\ln{\mu_{23}}$, $\ln{\mu_{31}}$, $\ln{\mu_{32}}$, $\ln{\mu_{33}}$ (where $\lambda_{i}$'s, $i\in\P$ and $\mu_{ij}$, $(i,j)\in E(\P)$ are as in \eqref{e:dLyapprop} and \eqref{e:muijineq}, respectively) are associated as weights corresponding to vertices and edges, respectively. The infinite walk corresponding to the said switching signal is: $1,(1,1),1,(1,2),2,(2,3),3,(3,2),2,$\\$(2,3),3,(3,3),3,\cdots$
        \end{itemize}
    \end{ex}
    
    For a walk $W = v_{0},(v_{0},v_{1}),v_{1},(v_{1},v_{2}),v_{2},\cdots$ on $G(\P,E(\P))$, we define the quantity
    \begin{align}
    \label{e:Xidefn}
    	\Xi(W) := \sum_{(k,\ell)\in E(\P)} \bigl(\ln\mu_{k\ell}-1_{\{k\in\P_{S}\}}\abs{\ln\lambda_{k}}
	+1_{\{k\in\P_{U}\}}\abs{\ln\lambda_{k}}\bigr)\#\{k\rightarrow\ell\}_{W},
    \end{align} 
    where $\#\{k\rightarrow\ell\}_{W}$ denotes the number of times an edge $(k,\ell)\in E(\P)$ appears in $W$, $\ln\mu_{k\ell}$ and $\abs{\ln\lambda_{k}}$ are weights associated to an edge $(k,\ell)\in E(\P)$ and a vertex $k\in\P$, respectively.
    
    \begin{ex}
    \label{ex:XiforW}
    \rm
    	Consider the switched system and its underlying weighted digraph from $G(\P,E(\P))$ from Example \ref{ex:wgraph}. Let $\P_{S} = \{1,2\}$ and $\P_{U} = \{3\}$. Consider the closed walk $W = 3,(3,2),2,(2,3),3$. Consequently, $\displaystyle{\Xi(W) = (\ln\mu_{32}+\abs{\ln\lambda_{3}})+(\ln\mu_{23}}$\\$\displaystyle{-\abs{\ln\lambda_{2}})}$.
    \end{ex}
    
    \begin{defn}
    \label{d:contra}
    	A walk $W$ on $G(\P,E(\P))$ is called contractive if it satisfies
	\begin{align}
	\label{e:contrawalk}
		\Xi(W) < 0.
	\end{align}
    \end{defn}
    
    We next describe a mechanism to generate an infinite walk on $G(\P,E(\P))$ by repeating a (finite) closed walk. The requirement of generating an infinite walk in terms of a closed walk is at the level of algorithmic construction and its importance in our context will be clear in \S\ref{s:mainres}.
    \begin{mech}
    \label{mech:repeatmech}
    	Consider a finite closed walk $W' = v_{0},(v_{0},v_{1}),v_{1},\ldots,v_{n-1},(v_{n-1},v_{0}),$\\$v_{0}$ of length $n > 0$ on $G(\P,E(\P))$. We build an infinite walk $W$ by repeating $W'$ infinitely many times in the following manner: $v_{0},(v_{0},v_{1}),v_{1},\ldots,v_{n-1},(v_{n-1},v_{0}),$\\$v_{0}, (v_{0},v_{1}),v_{1},\ldots,v_{n-1},(v_{n-1},v_{0}),v_{0}, \ldots$
    \end{mech}
    
    \begin{rem}
    \label{r:contradef}
    \rm
    	Observe that for a walk $W$ on $G(\P,E(\P))$, the definition of $\Xi(W)$ excludes the weight of the final vertex of $W$ (i.e., the number of times a vertex is visited is considered to be the same as the total number of times its outgoing edges are visited). This is however no loss of generality since our focus is on infinite walks constructed by repeating ($\grave{a}$ la Mechanism \ref{mech:repeatmech}) a closed contractive walk $W$ on $G(\P,E(\P))$.
    \end{rem}

\section{Main Result}
\label{s:mainres}
    We are now in a position to present our main result, a detailed proof of which is presented in \S\ref{s:proofs}.
    \begin{thm}
    \label{t:mainthm}
        Consider the underlying weighted digraph $G(\P,E(\P))$ of a switched system \eqref{e:swsys} as described in \S\ref{ss:wdigraph}. The switched system \eqref{e:swsys} is input-to-state stable (ISS) for every switching signal $\sigma$ whose corresponding ($\grave{a}$ la Fact \ref{fact:walk}) infinite walk $W$ is obtained by repeating ($\grave{a}$ la Mechanism \ref{mech:repeatmech}) a closed contractive walk $W'$ on $G(\P,E(\P))$.
    \end{thm}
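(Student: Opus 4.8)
The plan is to track the active ISS-Lyapunov-like value along the trajectory and to convert contractivity of the generating closed walk into an exponential decay estimate. First I would record a one-step recursion for $W_t := V_{\sigma(t)}(x(t))$. Combining the decay/growth inequality \eqref{e:dLyapprop} at the active index $\sigma(t)$ with the comparison inequality \eqref{e:muijineq} across the transition $(\sigma(t),\sigma(t+1))\in E(\P)$, and using that self-loop weights satisfy $\mu_{jj}=1$ so that dwelling is handled uniformly, I would obtain
\[
W_{t+1} \le \mu_{\sigma(t)\sigma(t+1)}\lambda_{\sigma(t)}\,W_t + \mu_{\sigma(t)\sigma(t+1)}\,\gamma(\norm{v(t)}).
\]
Unrolling this scalar linear recursion expresses $W_t$ as a product factor multiplying $W_0$ plus a convolution sum of input contributions, where every factor is a product of the quantities $\mu_{\sigma(s)\sigma(s+1)}\lambda_{\sigma(s)}$.

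The next step is the bridge to the digraph. Taking logarithms and using $\ln\lambda_k = -1_{\{k\in\P_{S}\}}\abs{\ln\lambda_k}+1_{\{k\in\P_{U}\}}\abs{\ln\lambda_k}$, I would recognise the per-step summand $\ln\mu_{\sigma(s)\sigma(s+1)}+\ln\lambda_{\sigma(s)}$ as exactly the per-edge contribution in \eqref{e:Xidefn}. Hence the product of factors over times $s=a,\ldots,a+L-1$ equals $\exp\bigl(\Xi(U)\bigr)$, where $U$ is the length-$L$ contiguous sub-walk of $W$ traversed over that interval.

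The crux, which I expect to be the main obstacle, is the following uniform estimate: for the infinite walk $W$ obtained by repeating the closed contractive walk $W'$ of length $n$, \emph{every} contiguous sub-walk $U$ satisfies $\Xi(U)\le C-\omega\abs{U}$, with $\omega:=-\Xi(W')/n>0$ and $C$ a constant depending only on $W'$. I would prove this from the additivity of $\Xi$ under concatenation together with periodicity: writing $S_j$ for the partial sums of the per-edge weights, periodicity gives $S_{j+n}=S_j+\Xi(W')$, so $\Xi(U)=S_{a+L}-S_a$ drops by $-\omega n$ per completed period, while the head and tail of $U$ lying in incomplete periods contribute at most $2\max_{0\le r<n}\abs{S_r}$, a finite quantity since only finitely many offsets occur. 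The difficulty is genuine: contractivity is a statement about a single full period, whereas ISS demands a decay bound holding at every time $t$ and, for the input term, uniformly over every tail sub-walk $W_{[s+1,t]}$; this lemma is precisely what upgrades per-period contraction to a true exponential bound.

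With this estimate the conclusion is routine. The state term is bounded by $e^{C}e^{-\omega t}W_0$, while bounding each $\mu_{\sigma(s)\sigma(s+1)}$ by $\mu_{\max}:=\max_{(i,j)\in E(\P)}\mu_{ij}$ and each $\gamma(\norm{v(s)})$ by $\gamma(\norm{v}_t)$ turns the convolution sum into $\gamma(\norm{v}_t)$ times a geometric series $\sum_{j\ge 0}e^{-\omega j}<\infty$, yielding $W_t\le e^{C}e^{-\omega t}W_0+\tilde\gamma(\norm{v}_t)$ with $\tilde\gamma\in\K$. Finally I would apply the sandwich bounds \eqref{e:Lyapprop} to get $\underline\alpha(\norm{x(t)})\le e^{C}e^{-\omega t}\overline\alpha(\norm{x(0)})+\tilde\gamma(\norm{v}_t)$ and split via $\underline\alpha^{-1}(a+b)\le\underline\alpha^{-1}(2a)+\underline\alpha^{-1}(2b)$, producing $\beta(r,t):=\underline\alpha^{-1}\bigl(2e^{C}e^{-\omega t}\overline\alpha(r)\bigr)\in\KL$ and $\gamma_{\mathrm{ISS}}(r):=\underline\alpha^{-1}\bigl(2\tilde\gamma(r)\bigr)\in\K$, which is exactly the ISS estimate \eqref{e:iss}; setting $v\equiv 0$ recovers GAS.
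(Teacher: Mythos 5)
Your proof is correct, and while it rests on the same underlying mechanism as the paper's --- periodicity converts the one-period contraction $\Xi(W') = -\varepsilon$ into geometric decay, with incomplete-period (boundary) contributions absorbed into a constant --- its packaging is genuinely different and more self-contained. Three differences are worth noting. First, you unroll a one-step recursion, treating dwelling as traversal of a self-loop with $\mu_{jj}=1$; the paper instead unrolls over switching intervals, producing the products $\prod_i\lambda_{\sigma(\tau_i)}^{\tau_{i+1}-\tau_i}\cdot\prod_i\mu_{\sigma(\tau_i)\sigma(\tau_{i+1})}$ of \eqref{e:psi1defn}--\eqref{e:psi2defn}; your flattened formulation sidesteps the paper's manipulations of the factors $\bigl(1-\lambda_k^{\tau_{i+1}-\tau_i}\bigr)/(1-\lambda_k)$ and its case split over $\P_{S}$ versus $\P_{U}$. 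Second, your key lemma --- every contiguous sub-walk $U$ of the periodic infinite walk satisfies $\Xi(U)\le C-\omega\abs{U}$ with $\omega=\varepsilon/\abs{W'}$, proved via periodicity of the partial sums $S_{j+n}=S_j+\Xi(W')$ --- is precisely the clean, quantified form of what the paper establishes by explicitly enumerating, for each switching instant $\tau_i$, how many complete copies of $W'$ lie between $\tau_i$ and $t$ (the long itemized list preceding \eqref{e:proof9}) and bounding the leftover pieces by the constant $a$; both arguments then sum the same geometric series. Third, you dispatch the state term with that same lemma, obtaining an explicit decay rate $\omega$ and explicit $\beta\in\KL$, whereas the paper handles its $\psi_1(\cdot)$ by citing the GAS result of \cite{knc_hscc14}; your route re-proves that ingredient but makes the whole argument free-standing. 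One point you should make explicit in a final write-up: the one-step recursion needs $(\sigma(t),\sigma(t+1))\in E(\P)$ at \emph{every} $t$, including dwelling steps; this holds exactly because $\sigma$ corresponds to a walk on $G(\P,E(\P))$ ($\grave{a}$ la Fact \ref{fact:walk}) and because the paper's convention $\ln\mu_{jj}=0$ makes the self-loop factors harmless.
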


        Given a family of systems such that not all subsystems are ISS, the above theorem identifies a class of switching signals under which the resulting switched system is ISS. A switching signal $\sigma$ which is a member of the said class of stabilizing switching signals, corresponds ($\grave{a}$ la Fact \ref{fact:walk}) to an infinite walk $W$ that is constructed by repeating ($\grave{a}$ la Mechanism \ref{mech:repeatmech}) a closed contractive walk $W'$ on $G(\P,E(\P))$ --- the underlying weighted digraph of the switched system \eqref{e:swsys}. Consequently, a stabilizing switching signal is periodic in nature with the period being equal to the length of the closed contractive walk. See \S\ref{s:proofs} for a detailed proof of the above theorem.

   \begin{ex}
   \label{ex:themx}
   \rm
   	Consider the switched system and its underlying weighted digraph from Example \ref{ex:wgraph}, and a closed contractive walk $W' = 3, (3,2), 2, (2,3), 3$. We construct an infinite walk $W$ by repeating the closed contractive walk $W'$, i.e., $W = 3, (3,2), 2, (2,3), 3, (3,2), 2, (2,3), 3, (3,2), 2, \cdots$. According to Theorem \ref{t:mainthm}, the switched system under consideration is ISS under a switching signal $\sigma$ corresponding to the infinite walk $W$.
   \end{ex}
	In the remainder of this section we elaborate on various features of Theorem \ref{t:mainthm}.
    \begin{rem}
    \rm
        The contractivity condition in \eqref{e:contrawalk} for a closed walk $W'$ can be rewritten as $\Xi(W') \leq -\varepsilon$ for some $\varepsilon > 0$. Consequently, how ``contractive'' the walk $W'$ is, depends on how large $\varepsilon$ is. The contractivity of the closed walk $W'$ corresponds to the ``stability margin'' of the switching signal $\sigma$.
    \end{rem}
    \begin{rem}
    \label{r:adtcompa}
    \rm
        Prior results on ISS of switched systems involve point-wise constraints on the number of switches and the duration of activation of subsystems, see e.g., \cite{chatterjee07, Liberzon_IOSS, BaiISS11, kcnl_arxiv15}. On the one hand, given a family of systems, such conditions allow us to guarantee ISS provided that the switching signal obeys some pre-specified conditions on the rate of switching. For example, let Assumption \ref{a:lambdaj} hold with $\lambda_{j} = \lambda_{S}$ for all $j\in\P_{S}$, $\lambda_{k}=\lambda_{U}$ for all $k\in\P_{U}$, and Assumption \ref{a:muij} holds with $\mu_{mn} = \mu$ for all $(m,n)\in E(\P)$. Consider the discrete-time analog of an ISS version of \cite[Theorem 2]{Liberzon_IOSS}. A stabilizing switching signal requires to obey for all $]s:t]\in\N_{0}$
        \begin{enumerate}[label = (\roman*), leftmargin = *]
        		\item Average dwell time condition: $\displaystyle{\mathrm{N}_{\sigma}(s,t) \leq \mathrm{N}_{0} + \frac{t-s}{\tau_{a}}}$
		with $\mathrm{N}_{0}\geq 0$,\\$\tau_{a}\in]\frac{\ln\mu}{\abs{\ln\lambda_{S}}(1-\overline{\rho})-\abs{\ln\lambda_{U}}\overline{\rho}},+\infty[$, and
		\item constrained activation of non-ISS subsystems $\displaystyle{\mathrm{T}^{\mathrm{U}}(s,t) \leq \mathrm{T}_{0} + \overline{\rho}(t-s)}$
		with $\mathrm{T}_{0} \geq 0$, $\displaystyle{\overline{\rho}<\frac{\abs{\ln\lambda_{S}}}{\abs{\ln\lambda_{S}}+\abs{\ln\lambda_{U}}}}$.
        \end{enumerate}
        However, given a family of systems \eqref{e:family}, algorithmic construction of the stabilizing switching signals involves verifying both conditions (i) and (ii) for every interval of time. On the other hand, the graph-theoretic condition involved in our result is numerically easier to verify. Indeed, given the underlying weighted digraph $G(\P,E(\P))$ of the switched system \eqref{e:swsys}, algorithmic construction of the class of switching signals proposed in Theorem \ref{t:mainthm} is identical to finding a closed contractive walk $W'$ on $G(\P,E(P))$ and generating an infinite walk by repeating $W'$. Consequently, our results are more useful for constructing periodic switching signals which preserve stability of a switched system than for certifying stability of a switched system when some conditions on the rate of switches is given a priori. 
    \end{rem}
    \begin{ex}
    \rm
    	Consider a family of systems $\P = \{1,2,3,4\}$ with $\P_{S} = \{1,2\}$ and $\P_{U} = \{3,4\}$. Let all switches be admissible. Let it also be allowed to dwell on every system for two or more consecutive time steps. To construct a switching signal obeying average dwell time, we need to perform the following:\\
	1. Fix $\displaystyle{\overline{\rho}<\frac{\abs{\ln\lambda_{S}}}{\abs{\ln\lambda_{S}}+\abs{\ln\lambda_{U}}}}$, $\tau_{a}\in]\frac{\ln\mu}{\abs{\ln\lambda_{S}}(1-\overline{\rho})-\abs{\ln\lambda_{U}}\overline{\rho}},+\infty[$, $N_{0}, T_{0} \geq 0$, where $\lambda_{S} = \min\{\lambda_{1},\lambda_{2}\}$ and $\lambda_{U} = \max\{\lambda_{3},\lambda_{4}\}$.\\
	2. Verify $\displaystyle{\mathrm{N}_{\sigma}(s,t) \leq \mathrm{N}_{0} + \frac{t-s}{\tau_{a}}}$ and $\displaystyle{\mathrm{T}^{\mathrm{U}}(s,t) \leq \mathrm{T}_{0} + \overline{\rho}(t-s)}$ for {every interval} $]s:t]\in\N_{0}$ {of time}.\\
	In contrast, applying our stabilizing conditions involves two steps:\\
	1. Algorithmically detect a closed contractive walk on $G(\P,E(\P))$.\\
	2. Construct an infinite walk $\grave{a}$ la Mechanism 1.\\
	It is clear that constructing a switching signal obeying average dwell time involves checking infinitely many point-wise conditions simultaneously, whereas our conditions are finitary.
    \end{ex}
    
    \begin{rem}
    \rm
    	On the one hand, the choice of the Lyapunov-like functions $V_{i}$, $i\in\P$ and consequently, the scalars $\lambda_{i}$, $i\in\P$ and $\mu_{ij}$, $(i,j)\in E(\P)$ are not unique. On the other hand, the existence of a closed contractive walk $W$ on $G(\P,E(\P))$ depends on the choice of the above mentioned scalars. Ideally, one would like to select the Lyapunov-like functions (and consequently the scalars $\lambda_{i}$, $i\in\P$ and $\mu_{ij}$, $(i,j)\in E(\P)$) such that there exists a closed contractive walk on $G(\P,E(\P))$. However, to the best of our knowledge, the above ``co-design'' problem is difficult and in the absence of numerical solution to it, we consider the scalars under consideration (and consequently the vertex and edge weights of $G(\P,E(\P))$) to be given.
    \end{rem}
    
    In the remainder of this section we discuss algorithmic construction of a closed contractive walk on $G(\P,E(\P))$.
    \subsection{Algorithmic construction of a closed contractive walk on $G(\P,E(\P))$}
        Even though a closed contractive walk is of finite length, an upper bound on its length is not known apriori. Consequently, under what condition an algorithm that attempts to detect/design a closed contractive walk on $G(\P,E(\P))$, should stop, cannot be specified. An immediate and natural alternative is to specialize a closed contractive walk to a contractive circuit or a contractive cycle. We follow the convention: A \emph{circuit} is a closed walk in which all edges are distinct, and a \emph{cycle} is a closed walk in which all vertices are distinct except that the initial vertex = final vertex. Consequently, the length of a circuit and a cycle are at most $\abs{E(\P)}$ and $\abs{\P}$, respectively. We showed in \cite{kbnc_arxiv14} that on a given weighted digraph $G(\P,E(\P))$, the existence of a closed contractive walk, a contractive circuit, and a contractive cycle are equivalent. As a result, algorithmic construction of a contractive circuit/cycle on $G(\P,E(\P))$ serves our purpose.

        Given the underlying weighted digraph $G(\P,E(\P))$ of the switched system \eqref{e:swsys}, in \cite[Theorem 2(b) and (c)]{knc_hscc14} we proposed an algorithm for construction of a contractive circuit on $G(\P,E(\P))$. This algorithm works in two steps: The first step involves a feasibility problem (linear program) for detection of a contractive circuit on $G(\P,E(\P))$; in the second step, a contractive circuit is designed using Hierholzer's algorithm, if one such circuit exists.\footnote{The feasibility problem is based on existing shortest path problem on digraphs \cite[\S 3.4]{papa_optimization}. Given an Eulerian graph $G(V,E)$, Hierholzer's algorithm returns an Eulerian circuit \cite[p.\ 57]{Harris}.}

        In \cite{kbnc_arxiv14} we showed that the algorithmic construction of a contractive cycle on $G(\P,E(\P))$ is equivalent to finding a negative cycle on $G(\P,E(\P))$, i.e., the cycle for which the sum of the weights is less than zero. Various algorithms are available in the literature to achieve the above, e.g., the Bellman-Ford-Moore algorithm (algorithmic detection) \cite[p.\ 646]{Cormen_algo}, the negative cycle algorithm proposed in \cite{allnegcycles} (for detection and design), etc.

\section{Numerical Example}
\label{s:numex}
    In this section we present a numerical example. We consider a family of systems \eqref{e:family} with
    \begin{align*}
    	f_{1}(x,v) &= \pmat{1.05x_{2}+0.05x_{2}\exp(-\abs{x_{2}})+\exp(-\abs{x_{1}})v\\0.7x_{1}+\exp(-\abs{x_{2}})v},
	\intertext{and}
	f_{2}(x,v) &= \pmat{2x_{1}\sin(x_{1})+\exp(-\abs{x_{1}})v\\\sqrt{6}x_{2}+\exp(-\abs{x_{2}})v}.
    \end{align*}
    Clearly, $\P_{S} = \{1\}$ and $\P_{U} = \{2\}$. With the following choice of ISS-Lyapunov-like functions $V_{1}(x) = V_{2}(x) = 2x_{1}^{2}+3x_{2}^{2}$, we obtain $\lambda_{1} = 0.815, \lambda_{2} = 1.2,
	\mu_{12} = 1,  \mu_{21} = 1$.
    
    Let the following switches be allowed: $1\rightarrow 2$ and $2\rightarrow 1$. Let it also be admissible to dwell on system 2 for two (or more) consecutive time steps. We have $G(\P,E(\P))$ with $\P = \{1,2\}, E(\P) = \{(1,2),(2,1),(2,2)\}$.
    We now seek for a closed contractive walk $W'$ on the above weighted digraph $G(\P,E(\P))$. Towards this end, we apply our algorithm proposed in \cite[Theorem 2(b) and (c)]{knc_hscc14} for detection of a contractive circuit. The node (arc) incidence matrix $A$ for the above digraph $G(\P,E(\P))$ is:
    \begin{align*}
    	A = \bordermatrix{~ & (1,2) & (2,1) & (2,2') & (2',2) \cr
              1 & +1 & -1 & 0 & 0 \cr
              2 & -1 & +1 & +1 &-1 \cr
              2' &0 & 0 & -1 &+1\cr }.
    \end{align*}
    The vertex $2'$ is introduced to accommodate the self-loop at vertex $2$, see \cite[\S 3]{knc_hscc14} for a discussion on how to include self-loops in an incidence matrix. Solving the feasibility problem in \cite[Theorem 2(b)]{knc_hscc14} in the context of this example with the aid of MATLAB by employing the program YALMIP \cite{Loefberg04} and the solver SDPT3, we obtain the following solution: $\eta = (1,1 ,0 ,0)^\top$,
    with $\Xi(W') = 0.89126< 1$. Following is a circuit obtained from the vector $\eta$ with the aid of Hierholzer's algorithm: $W' = 1,2,1$.
    
    We now consider an infinite walk $W$ obtained by repeating ($\grave{a}$ la Mechanism \ref{mech:repeatmech}) the above contractive circuit $W'$. We apply the switching signal $\sigma$ corresponding ($\grave{a}$ la Fact \ref{fact:walk}) to the above infinite walk $W$ to the switched system \eqref{e:swsys}, and study the nature of $(x(t))_{t\in\N_{0}}$ for fifty different initial conditions $x(0)$ chosen uniformly at random from $[-500,500]^{2}$, and inputs $v$ chosen uniformly at random from $]0,10[$ in Figure \ref{fig:xplot}.
    
  \begin{figure}
  	\includegraphics[height = 6 cm, width = 9 cm]{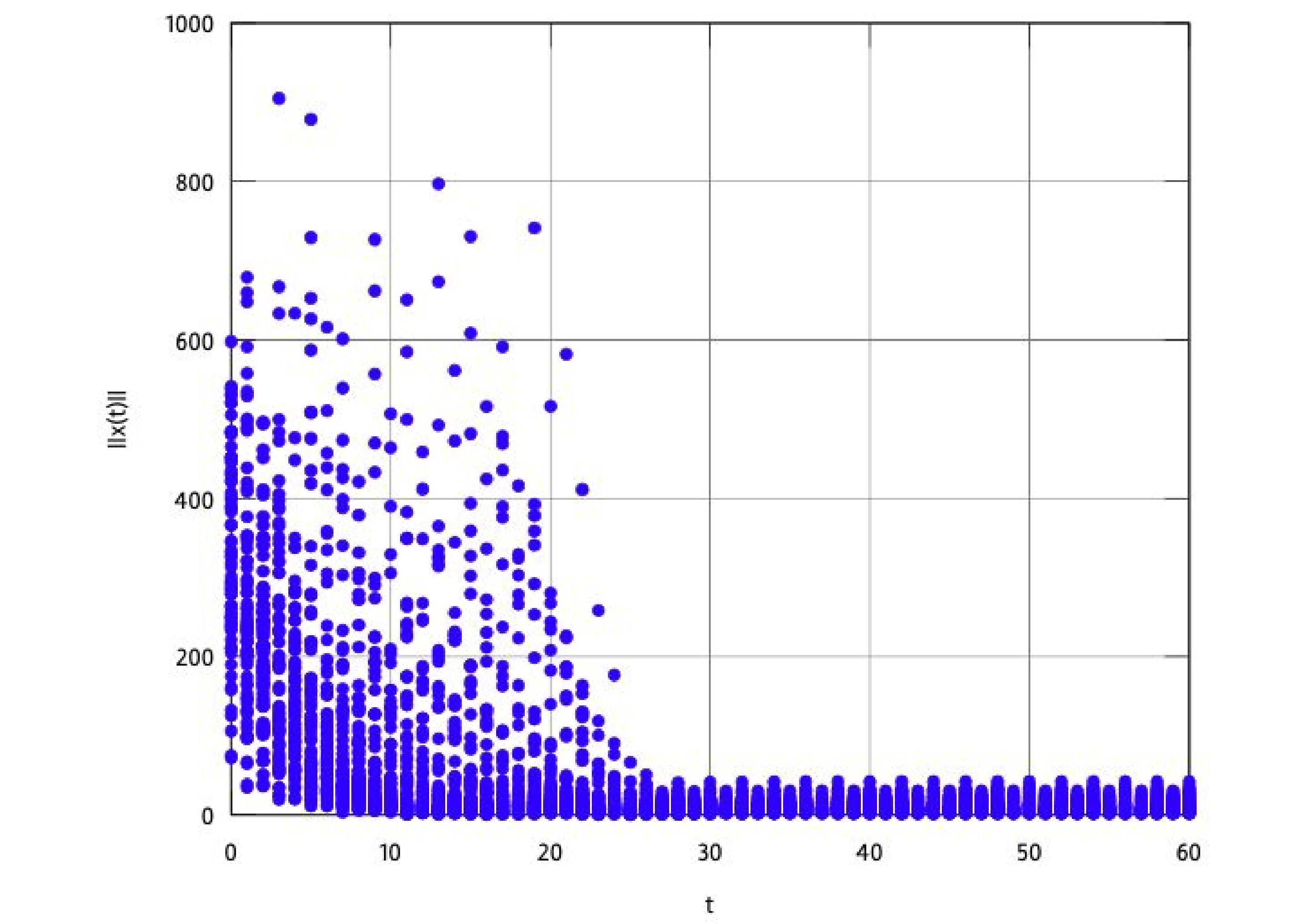}
	\caption{Plot for $(\norm{x(t)})_{t\in\N_{0}}$} \label{fig:xplot}
	

  \end{figure}
\section{Conclusion}
\label{s:concln}
	Given a family of discrete-time systems with exogenous inputs such that not all subsystems are ISS, in this article we presented a class of switching signals under which the resulting switched system is ISS. We employed graph-theoretic arguments in our analysis. A weighted digraph is associated to the given family of systems and the admissible transitions, and the switching signal is expressed as an infinite walk on the weighted digraph. Our stabilizing switching signals are periodic in nature in the sense that they correspond to infinite walks constructed by repeating suitable closed walks on the above weighted digraph. Consequently, these switching signals afford simple algorithmic construction. 
	
	On the one hand, the proposed stability condition requires presence of at least one ISS system in the family and hence does not cater to the families in which all systems are unstable. On the other hand, we do not require the unstable systems to form a stable combination. We conjecture that the class of switching signals discussed in this article readily extends to input/output-to-state stability (IOSS) of discrete-time switched systems.
	
	Moreover, we observe that our stabilizing switching signals are ``not necessarily'' periodic. If the underlying weighted digraph of a switched system admits multiple closed contractive walks which can be concatenated, then it is possible to generate aperiodic infinite walks that correspond to stabilizing switching signals. A detailed analysis for this will be reported elsewhere.
    
\section{Proof of Theorem \ref{t:mainthm}}
\label{s:proofs}
	We first catalog the following lemma, which will be utilized in our proof of Theorem \ref{t:mainthm}.
	
	\begin{lem}
	\label{lem:contrawalk}
		Consider the underlying weighted digraph $G(\P,E(\P))$ of the switched system \ref{e:swsys}, and an infinite walk $W$ that is constructed by repeating ($\grave{a}$ la Mechanism \ref{mech:repeatmech}) a closed contractive walk $W'$ on $G(\P,E(\P))$. Every closed sub walk $W''$ of the above infinite walk $W$ such that $\abs{W'} = \abs{W''}$, satisfies the following:\\
		i) $W''$ is contractive, and\\
		ii) $\displaystyle{\Xi(W'') = \Xi(W')}$.
	\end{lem}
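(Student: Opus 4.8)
The plan is to exploit the fact that the functional $\Xi$ defined in \eqref{e:Xidefn} depends on a walk \emph{only} through the number of times each edge is traversed, and is therefore invariant under any reordering of the edges of a walk. Concretely, for each edge $(k,\ell)\in E(\P)$ write $c_{k\ell} := \ln\mu_{k\ell}-1_{\{k\in\P_{S}\}}\abs{\ln\lambda_{k}}+1_{\{k\in\P_{U}\}}\abs{\ln\lambda_{k}}$, so that $\Xi(W) = \sum_{(k,\ell)\in E(\P)} c_{k\ell}\,\#\{k\to\ell\}_{W}$. Thus $\Xi(W)$ is completely determined by the multiset of edges appearing in $W$, and any two walks that traverse each edge the same number of times have the same value of $\Xi$. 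This additive-over-edges structure is the engine of the whole argument.

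First I would record the periodic structure of $W$. Writing $W' = v_{0},e_{1},v_{1},\ldots,v_{n-1},e_{n},v_{0}$ with $n = \abs{W'}$ and $e_{i}=(v_{i-1},v_{i})$ (vertex indices read modulo $n$, so $v_{n}=v_{0}$), Mechanism \ref{mech:repeatmech} produces an infinite walk whose vertex sequence is $(v_{m\bmod n})_{m\geq 0}$ and whose edge sequence is periodic with period $n$. In particular, any contiguous sub-walk of $W$ of length $n$ visits vertices $v_{j\bmod n},\ldots,v_{(j+n)\bmod n}$ for some starting index $j$, and since $v_{(j+n)\bmod n}=v_{j\bmod n}$ its initial and terminal vertices coincide; hence every length-$n$ contiguous sub-walk is automatically closed.

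Next I would analyze an arbitrary closed sub-walk $W''$ with $\abs{W''}=\abs{W'}=n$. Such a $W''$ is a window of $n$ consecutive edges of $W$, say the edges occupying positions $j+1,\ldots,j+n$. Because the edge sequence has period $n$ and the window has length exactly $n$, the residues of $j+1,\ldots,j+n$ modulo $n$ form a complete residue system, so this window is a cyclic rotation of $e_{1},\ldots,e_{n}$ and has exactly the same edge multiset as $W'$. Consequently $\#\{k\to\ell\}_{W''} = \#\{k\to\ell\}_{W'}$ for every $(k,\ell)\in E(\P)$.

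Finally I would combine these facts. The equality of edge counts, together with the representation $\Xi(W)=\sum_{(k,\ell)} c_{k\ell}\,\#\{k\to\ell\}_{W}$, immediately yields $\Xi(W'')=\Xi(W')$, which is assertion (ii); and since $W'$ is contractive we have $\Xi(W')<0$, whence $\Xi(W'')<0$, giving that $W''$ is contractive, which is assertion (i). The closest thing to an obstacle is the elementary combinatorial point that a length-$n$ window in a period-$n$ edge sequence realizes each edge of one full period exactly once as a multiset; once this is stated precisely (via the complete-residue-system observation above) the two conclusions follow with no further computation.
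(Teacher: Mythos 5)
Your proof is correct and takes essentially the same route as the paper, which disposes of the lemma in one line by observing that $W''$ is ``nothing but a rotated version'' of $W'$; your complete-residue-system argument plus the observation that $\Xi$ depends only on the edge multiset is exactly a careful expansion of that rotation remark. No gaps.
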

	The above lemma follows from the observation that the closed sub walk $W''$ of $W$ is nothing but a rotated version of the closed contractive walk $W'$. Consequently, i) and ii) follow at once.
	
	We are now in a position to present our 
	\begin{proof}[Proof of Theorem \ref{t:mainthm}]
		Fix $t\in\N$. Let $N_{\sigma}(0,t)$ be the number of switches of $\sigma$ before (and including) $t$, and let $0 =: \tau_{0}<\tau_{1}<\cdots$ be the corresponding switching instants.
		
		Applying \eqref{e:dLyapprop} and \eqref{e:muijineq}, we obtain
		\begin{align}
		\label{e:proof1}
			V_{\sigma(t)}(x(t)) \leq \psi_{1}(t)V_{\sigma(0)}(x(0)) + \gamma(\norm{v}_{t})\psi_{2}(t),
		\end{align}
		where
		\begin{align}
		\label{e:psi1defn}
			\psi_{1}(t) := \Biggl(\prod_{\substack{i=0\\\tau_{\Ntsigma+1}:=t}}^{\Ntsigma}\lambda_{\sigma(\tau_{i})}^{\tau_{i+1}-\tau_{i}}\cdot\prod_{i=0}^{\Ntsigma-1}\mu_{\sigma(\tau_{i})\sigma(\tau_{i+1})}\Biggr),
		\end{align}
		and
		\begin{align}
		\label{e:psi2defn}
			\psi_{2}(t) := \Biggl(\sum_{\substack{i=0\\\tau_{\Ntsigma+1}:=t}}^{\Ntsigma}\Biggl(\prod_{j=i+1}^{\Ntsigma}\lambda_{\sigma(\tau_{j})}^{\tau_{j+1}-\tau_{j}}\cdot\prod_{j=i+1}^{\Ntsigma-1}\mu_{\sigma(\tau_{j})\sigma(\tau_{j+1})}\cdot\sum_{k=0}^{(\tau_{i+1}-\tau_{i})-1}\lambda_{\sigma(\tau_{i})}^{k}\Biggr)\Biggr).
		\end{align}
		In view of \eqref{e:Lyapprop}, we obtain
		\begin{align}
		\label{e:proof2}
			\underline{\alpha}(\norm{x(t)}) \leq \psi_{1}(t)\overline{\alpha}(\norm{x(0)}) + \gamma(\norm{v}_{t})\psi_{2}(t).
		\end{align}
		By Definition \ref{d:iss}, for ISS of \eqref{e:swsys}, we need to show the following:\\
		i) $\psi_{1}(\cdot)$ is bounded above by a class $\mathcal{L}$ function, and\\
		ii) $\psi_{2}(\cdot)$ is bounded.\\\\
		i) In the absence of inputs, i.e., $v \equiv 0$, the switching signal under consideration guarantees GAS of \eqref{e:swsys} \cite[Theorem 1]{knc_hscc14}. Consequently, i) is verified.\\
		ii) Observe that $\psi_{2}(t)$ can be rewritten as 
		\begin{align*}
			\psi_{2}(t) &= \sum_{k\in\P}\sum_{\substack{i:\sigma(\tau_{i}) = k\\i = 0,\cdots, \Ntsigma\\\tau_{\mathrm{N}_{\sigma(0,t)+1}:=t}}}\Biggl(\prod_{j=i+1}^{\Ntsigma}\lambda_{\sigma(\tau_{j})}^{\tau_{j+1}-\tau_{j}}\cdot\prod_{j=i+1}^{\Ntsigma-1}\mu_{\sigma(\tau_{j})\sigma(\tau_{j+1})}\cdot\frac{1-\lambda_{k}^{\tau_{i+1}-\tau_{i}}}{1-\lambda_{k}}\Biggr)\nonumber\\
			&=\sum_{k\in\P_{S}}\frac{1}{1-\lambda_{k}}\sum_{\substack{i:\sigma(\tau_{i}) = k\\i = 0,\cdots, \Ntsigma\\\tau_{\mathrm{N}_{\sigma(0,t)+1}:=t}}}\Biggl(\prod_{j=i+1}^{\Ntsigma}\lambda_{\sigma(\tau_{j})}^{\tau_{j+1}-\tau_{j}}\cdot\prod_{j=i+1}^{\Ntsigma-1}\mu_{\sigma(\tau_{j})\sigma(\tau_{j+1})}\cdot{\bigl(1-\lambda_{k}^{\tau_{i+1}-\tau_{i}}\bigr)}\Biggr)\nonumber\\
			&\:\:+\sum_{\ell\in\P_{U}}\frac{1}{1-\lambda_{\ell}}\sum_{\substack{i:\sigma(\tau_{i}) = \ell\\i = 0,\cdots, \Ntsigma\\\tau_{\mathrm{N}_{\sigma(0,t)+1}:=t}}}\Biggl(\prod_{j=i+1}^{\Ntsigma}\lambda_{\sigma(\tau_{j})}^{\tau_{j+1}-\tau_{j}}\cdot\prod_{j=i+1}^{\Ntsigma-1}\mu_{\sigma(\tau_{j})\sigma(\tau_{j+1})}\cdot{\bigl(1-\lambda_{\ell}^{\tau_{i+1}-\tau_{i}}\bigr)}\Biggr). 
		\end{align*}
		Since $\displaystyle{0 < \lambda_{k} < 1}$, $k\in\P_{S}$ and $\lambda_{\ell} > 1$, $\ell\in\P_{U}$, we have the right-hand side of the above quantity is equal to
		\begin{align}
		\label{e:proof4}
			&\sum_{k\in\P_{S}}\frac{1}{\abs{1-\lambda_{k}}}\sum_{\substack{i:\sigma(\tau_{i}) = k\\i = 0,\cdots, \Ntsigma\\\tau_{\mathrm{N}_{\sigma(0,t)+1}:=t}}}\Biggl(\prod_{j=i+1}^{\Ntsigma}\lambda_{\sigma(\tau_{j})}^{\tau_{j+1}-\tau_{j}}\cdot\prod_{j=i+1}^{\Ntsigma-1}\mu_{\sigma(\tau_{j})\sigma(\tau_{j+1})}\cdot{\bigl(1-\lambda_{k}^{\tau_{i+1}-\tau_{i}}\bigr)}\Biggr)\nonumber\\
			&-\sum_{\ell\in\P_{U}}\frac{1}{\abs{1-\lambda_{\ell}}}\sum_{\substack{i:\sigma(\tau_{i}) = \ell\\i = 0,\cdots, \Ntsigma\\\tau_{\mathrm{N}_{\sigma(0,t)+1}:=t}}}\Biggl(\prod_{j=i+1}^{\Ntsigma}\lambda_{\sigma(\tau_{j})}^{\tau_{j+1}-\tau_{j}}\cdot\prod_{j=i+1}^{\Ntsigma-1}\mu_{\sigma(\tau_{j})\sigma(\tau_{j+1})}\cdot{\bigl(1-\lambda_{\ell}^{\tau_{i+1}-\tau_{i}}\bigr)}\Biggr)\nonumber\\
			&\leq \sum_{k\in\P_{S}}\frac{1}{\abs{1-\lambda_{k}}}\sum_{\substack{i:\sigma(\tau_{i}) = k\\i = 0,\cdots, \Ntsigma\\\tau_{\mathrm{N}_{\sigma(0,t)+1}:=t}}}\Biggl(\prod_{j=i+1}^{\Ntsigma}\lambda_{\sigma(\tau_{j})}^{\tau_{j+1}-\tau_{j}}\cdot\prod_{j=i+1}^{\Ntsigma-1}\mu_{\sigma(\tau_{j})\sigma(\tau_{j+1})}\Biggr)\nonumber\\
			&-\sum_{\ell\in\P_{U}}\frac{1}{\abs{1-\lambda_{\ell}}}\sum_{\substack{i:\sigma(\tau_{i}) = \ell\\i = 0,\cdots, \Ntsigma\\\tau_{\mathrm{N}_{\sigma(0,t)+1}:=t}}}\Biggl(\prod_{j=i+1}^{\Ntsigma}\lambda_{\sigma(\tau_{j})}^{\tau_{j+1}-\tau_{j}}\cdot\prod_{j=i+1}^{\Ntsigma-1}\mu_{\sigma(\tau_{j})\sigma(\tau_{j+1})}\cdot\lambda_{\ell}^{\tau_{i+1}-\tau_{i}}\Biggr).
		\end{align}
		Also,
		\begin{align}
		\label{e:proof5}
			\ln\Biggl(\prod_{j=i+1}^{\Ntsigma-1}\mu_{\sigma(\tau_{j})\sigma(\tau_{j+1})}\Biggr) &= \sum_{j=i+1}^{\Ntsigma-1}\ln\mu_{\sigma(\tau_{j})\sigma(\tau_{j+1})} \nonumber\\
			&=\sum_{k\in\P_{S}}\sum_{j=i+1}^{\Ntsigma-1}\sum_{\substack{k\rightarrow\ell:\\(k,\ell)\in E(\P)}}\ln\mu_{k\ell} \nonumber\\
			&=\sum_{(k,\ell)\in E(\P)}(\ln\mu_{k\ell})\#\{k\rightarrow\ell\}_{\tau_{i+1}}^{t},
		\end{align}
		where $\#\{k\rightarrow\ell\}_{s}^{t}$ denotes the number of times a switch from system $k$ to system $\ell$ has occurred in the interval $]s:t]\subset\N_{0}$, and
		\begin{align}
		\label{e:proof6}
		\ln\Biggl(\prod_{j=i+1}^{\Ntsigma}\lambda_{\sigma(\tau_{j})}^{\tau_{j+1}-\tau_{j}}\Biggr) &= \sum_{j=i+1}^{\Ntsigma}(\tau_{j+1}-\tau_{j})\ln\lambda_{\sigma(\tau_{j})} \nonumber\\
		&= \sum_{j=i+1}^{\Ntsigma}\Bigl(\sum_{k\in\P}1_{\{\sigma(\tau_{j})=k\}}(\tau_{j+1}-\tau_{j})\ln\lambda_{k}\Bigr) \nonumber\\
		&\hspace*{-1.5cm}= \sum_{k\in\P_{S}}\ln\lambda_{k}\sum_{\substack{j:\sigma(\tau_{j})=k\\j = i+1,\cdots,\Ntsigma}}(\tau_{j+1}-\tau_{j}) 
		+\sum_{\ell\in\P_{U}}\ln\lambda_{\ell}\sum_{\substack{j:\sigma(\tau_{j})=\ell\\j = i+1,\cdots,\Ntsigma}}(\tau_{j+1}-\tau_{j}) \nonumber\\
		&\hspace*{-1.5cm}= -\sum_{k\in\P_{S}}\abs{\ln\lambda_{k}}\sum_{\substack{j:\sigma(\tau_{j})=k\\j = i+1,\cdots,\Ntsigma}}(\tau_{j+1}-\tau_{j}) 
		+\sum_{\ell\in\P_{U}}\abs{\ln\lambda_{\ell}}\sum_{\substack{j:\sigma(\tau_{j})=\ell\\j = i+1,\cdots,\Ntsigma}}(\tau_{j+1}-\tau_{j}) \nonumber\\
		&\hspace*{-1.5cm}=-\sum_{k\in\P_{S}}\abs{\ln\lambda_{k}}\#\{k\}_{\tau_{i+1}}^{t} + \sum_{\ell\in\P_{U}}\abs{\ln\lambda_{\ell}}\#\{\ell\}_{\tau_{i+1}}^{t},
		\end{align}
		where $\#\{k\}_{s}^{t}$ denotes the number of times a system $k$ is activated in the interval $]s:t]\subset\N_{0}$.\\
		In view of \eqref{e:proof5} and \eqref{e:proof6}, it is immediate that the quantity in \eqref{e:proof4} is at most equal to
		\begin{align}
		\label{e:proof7}
			&\sum_{k\in\P_{S}}\frac{1}{\abs{1-\lambda_{k}}}\sum_{\substack{i:\sigma(\tau_{i})=k\\i=0,\cdots,\Ntsigma\\\tau_{\Ntsigma+1}:=t}}\exp\Biggl(-\sum_{p\in\P_{S}}\abs{\ln\lambda_{p}}\#\{p\}_{\tau_{i+1}}^{t}\nonumber\\
			&+\sum_{q\in\P_{U}}\abs{\ln\lambda_{q}}\#\{q\}_{\tau_{i+1}}^{t} + \sum_{(m,n)\in E(\P)}(\ln\mu_{mn})\#\{m\rightarrow n\}_{\tau_{i+1}}^{t}\Biggr) \nonumber\\
			&\sum_{\ell\in\P_{U}}\frac{1}{\abs{1-\lambda_{\ell}}}\sum_{\substack{i:\sigma(\tau_{i})=\ell\\i=0,\cdots,\Ntsigma\\\tau_{\Ntsigma+1}:=t}}\exp\Biggl(-\sum_{p\in\P_{S}}\abs{\ln\lambda_{p}}\#\{p\}_{\tau_{i}}^{t}\nonumber\\
			&+\sum_{q\in\P_{U}}\abs{\ln\lambda_{q}}\#\{q\}_{\tau_{i}}^{t} + \sum_{(m,n)\in E(\P)}(\ln\mu_{mn})\#\{m\rightarrow n\}_{\tau_{i}}^{t}\Biggr).
		\end{align}
		Let for an interval $]s:t]\subset\N_{0}$ of time,
		\begin{align}
		\label{e:gdefn}
			g(s,t) &:= -\sum_{k\in\P_{S}}\abs{\ln\lambda_{k}}\#\{k\}_{s}^{t} + \sum_{\ell\in\P_{U}}\abs{\ln\lambda_{\ell}}\#\{k\}_{s}^{t}
			+ \sum_{(m,n)\in E(\P)}(\ln\mu_{mn})\#\{m\rightarrow n\}_{s}^{t}.
		\end{align} 
		Applying \eqref{e:gdefn}, \eqref{e:proof7} can be rewritten as
		\begin{align}
			\psi_{2}(t) &= \sum_{k\in\P_{S}}\frac{1}{\abs{1-\lambda_{k}}}\sum_{\substack{i:\sigma(\tau_{i}) = k\\i=0,\cdots,\Ntsigma\\\tau_{\Ntsigma+1}:=t}}\exp\bigl(g(\tau_{i+1},t)\bigr) 
			+ \sum_{\ell\in\P_{U}}\frac{1}{\abs{1-\lambda_{\ell}}}\sum_{\substack{i:\sigma(\tau_{i}) = \ell\\i=0,\cdots,\Ntsigma\\\tau_{\Ntsigma+1}:=t}}\exp\bigl(g(\tau_{i},t)\bigr) \nonumber\\
			&\leq \sum_{k\in\P_{S}}\frac{1}{\abs{1-\lambda_{k}}}\sum_{i=0}^{\Ntsigma}\exp\bigl(g(\tau_{i+1},t)\bigr) 
			+\sum_{\ell\in\P_{U}}\frac{1}{\abs{1-\lambda_{\ell}}}\sum_{i=0}^{\Ntsigma}\exp\bigl(g(\tau_{i},t)\bigr).
		\end{align}
		We now concentrate on the quantity $\displaystyle{\sum_{i=0}^{\Ntsigma}\exp\bigl(g(\tau_{i},t)\bigr)}$. Recall that our switching signal corresponds to an infinite walk $W$ constructed by repeating a closed contractive walk $W'$ on $G(\P,E(\P))$. Let $t > n\abs{W'}$ for some $n \geq 0$ and $\Xi(W') = -\varepsilon$ for some $\varepsilon > 0$. By construction of $W$, the following are immediate:
         \begin{itemize}[label = $\circ$, leftmargin = *]
            \item For $0 \leq \tau_{i} \leq t-n\abs{W'}$, there are $n$ closed contractive walks of length $\abs{W'}$ between $\tau_{i}$ and $t$.
            \item For $t-n\abs{W'}+1 \leq \tau_{i} \leq \abs{W'}$, there are $(n-1)$ closed contractive walks of length $\abs{W'}$ between $\tau_{i}$ and $t$.
            \item For $\abs{W'}+1 \leq \tau_{i} \leq \abs{W'}+(t-n\abs{W'})$, there are $(n-1)$ closed contractive walks of length $\abs{W'}$ between $\tau_{i}$ and $t$.
            \item For $\abs{W'}+(t-n\abs{W'})+1 \leq \tau_{i} \leq 2\abs{W'}$, there are $(n-2)$ closed contractive walks of length $\abs{W'}$ between $\tau_{i}$ and $t$.
            \item For $2\abs{W'}+1 \leq \tau_{i} \leq 2\abs{W'}+(t-n\abs{W'})$, there are $(n-2)$ closed contractive walks of length $\abs{W'}$ between $\tau_{i}$ and $t$.
            \item For $2\abs{W'}+(t-n\abs{W'})+1 \leq \tau_{i} \leq 3\abs{W'}$, there are $(n-3)$ closed contractive walks of length $\abs{W'}$ between $\tau_{i}$ and $t$.
            \item For $3\abs{W'}+1 \leq \tau_{i} \leq 3\abs{W'}+(t-n\abs{W'})$, there are $(n-3)$ closed contractive walks of length $\abs{W'}$ between $\tau_{i}$ and $t$.
            \item $\vdots$
         \end{itemize}
         We have
         \begin{align}
         \label{e:proof9}
            \sum_{i=0}^{\Ntsigma}\exp\bigl(g(\tau_{i},t)\bigr) &= \sum_{\tau_{i}:0\leq\tau_{i}\leq n\abs{W'}}\exp\bigl(g(\tau_{i},t)\bigr) + \sum_{\tau_{i}:(n\abs{W'}+1)\leq\tau_{i}\leq t}\exp\bigl(g(\tau_{i},t)\bigr).
         \end{align}
         Let 
         \begin{align*}
         	\displaystyle{a := \max_{i,(i,j),j\subset W'}\abs{(\ln\mu_{ij})+\abs{\ln\lambda_{j}}}}.
	\end{align*}
         \begin{align}
         \label{e:proof11}
            \sum_{\tau_{i}:0\leq\tau_{i}\leq n\abs{W'}}\exp\bigl(g(\tau_{i},t)\bigr) &= \sum_{\tau_{i}:0\leq\tau_{i}\leq (t-n\abs{W'})}\exp\bigl(g(\tau_{i},t)\bigr)\nonumber\\
            &\hspace*{-3cm}+\sum_{\tau_{i}:(t-n\abs{W'})+1\leq\tau_{i}\leq \abs{W'}}\exp\bigl(g(\tau_{i},t)\bigr)
            + \sum_{\tau_{i}:(\abs{W'}+1)\leq\tau_{i}\leq \abs{W'}+(t-n\abs{W'})}\exp\bigl(g(\tau_{i},t)\bigr)\nonumber\\
            &\hspace*{-3cm}+\sum_{\tau_{i}:\abs{W'}+(t-n\abs{W'})+1\leq\tau_{i}\leq 2\abs{W'}}\exp\bigl(g(\tau_{i},t)\bigr)+\sum_{\tau_{i}:(2\abs{W'}+1)\leq\tau_{i}\leq 2\abs{W'}+(t-n\abs{W'})}\exp\bigl(g(\tau_{i},t)\bigr)\nonumber\\
            &\hspace*{-3cm}+\sum_{\tau_{i}:2\abs{W'}+(t-n\abs{W'})+1\leq\tau_{i}\leq 3\abs{W'}}\exp\bigl(g(\tau_{i},t)\bigr)+\cdots\nonumber\\
            &\hspace*{-3cm}+\sum_{\tau_{i}:((n-1)\abs{W'}+1)\leq\tau_{i}\leq (n-1)\abs{W'}+(t-n\abs{W'})}\exp\bigl(g(\tau_{i},t)\bigr)\nonumber\\
            &\hspace*{-3cm}+\sum_{\tau_{i}:(n-1)\abs{W'}+(t-n\abs{W'})+1\leq\tau_{i}\leq n\abs{W'}}\exp\bigl(g(\tau_{i},t)\bigr).
         \end{align}
         Now,
         \begin{align*}
            \sum_{\tau_{i}:0\leq\tau_{i}\leq (t-n\abs{W'})}\exp\bigl(g(\tau_{i},t)\bigr)&= \sum_{\tau_{i}:0\leq\tau_{i}\leq (t-n\abs{W'})}\exp\bigl(-n\varepsilon+g(\tau_{i}+n\abs{W'},t)\bigr)\nonumber\\
            &\leq \sum_{\tau_{i}:0\leq\tau_{i}\leq (t-n\abs{W'})}\exp\bigl(-n\varepsilon+(\abs{W'}-1)a\bigr)\nonumber\\
            &\leq (t-n\abs{W'})\exp\bigl(-n\varepsilon+(\abs{W'}-1)a\bigr)\nonumber\\
            &\leq (\abs{W'}-1)\exp\bigl(-n\varepsilon+(\abs{W'}-1)a\bigr),
         \end{align*}
         \begin{align*}
            &\sum_{\tau_{i}:(t-n\abs{W'})+1\leq\tau_{i}\leq \abs{W'}}\exp\bigl(g(\tau_{i},t)\bigr)\nonumber\\&= \sum_{\tau_{i}:(t-n\abs{W'})+1\leq\tau_{i}\leq \abs{W'}}\exp\bigl(-(n-1)\varepsilon+g(\tau_{i}+(n-1)\abs{W'},t)\bigr)\nonumber\\
            &\leq \sum_{\tau_{i}:(t-n\abs{W'})+1\leq\tau_{i}\leq \abs{W'}}\exp\bigl(-(n-1)\varepsilon+(\abs{W'}-1)a\bigr)\nonumber\\
            &\leq (\abs{W'}-(t-n\abs{W'})-1)\exp\bigl(-(n-1)\varepsilon+(\abs{W'}-1)a\bigr)\nonumber\\
            &\leq (\abs{W'}-1)\exp\bigl(-(n-1)\varepsilon+(\abs{W'}-1)a\bigr).
         \end{align*}
         Similarly,
         \begin{align*}
            \sum_{\tau_{i}:(\abs{W'}+1)\leq\tau_{i}\leq \abs{W'}+(t-n\abs{W'})}\exp\bigl(g(\tau_{i},t)\bigr)&\leq (\abs{W'}-2)\exp\bigl(-(n-1)\varepsilon+(\abs{W'}-2)a\bigr),
         \end{align*}
         \begin{align*}
            \sum_{\tau_{i}:(\abs{W'}+(t-n\abs{W'})+1)\leq\tau_{i}\leq 2\abs{W'}}\exp\bigl(g(\tau_{i},t)\bigr)&\leq (\abs{W'}-1)\exp\bigl(-(n-2)\varepsilon+(\abs{W'}-1)a\bigr),
         \end{align*}
         \begin{align*}
            \sum_{\substack{\tau_{i}:(2\abs{W'}+1)\leq\tau_{i}\leq 2\abs{W'}+(t-n\abs{W'})}}\exp\bigl(g(\tau_{i},t)\bigr)&\leq (\abs{W'}-2)\exp\bigl(-(n-2)\varepsilon+(\abs{W'}-2)a\bigr),
         \end{align*}
         \begin{align*}
            \sum_{\tau_{i}:(2\abs{W}'+(t-n\abs{W'})+1)\leq\tau_{i}\leq 3\abs{W'}}\exp\bigl(g(\tau_{i},t)\bigr)&\leq (\abs{W'}-1)\exp\bigl(-(n-3)\varepsilon+(\abs{W'}-1)a\bigr),\\
            \vdots \nonumber
         \end{align*}
         \begin{align*}
            \sum_{\tau_{i}:((n-1)\abs{W'}+1)\leq\tau_{i}\leq (n-1)\abs{W'}+(t-n\abs{W'})}\exp\bigl(g(\tau_{i},t)\bigr)&\leq (\abs{W'}-2)\exp\bigl(-\varepsilon+(\abs{W'}-2)a\bigr),
         \end{align*}
         \begin{align*}
            \sum_{\tau_{i}:((n-1)\abs{W'}+(t-n\abs{W'})+1)\leq\tau_{i}\leq n\abs{W'}}\exp\bigl(g(\tau_{i},t)\bigr)&\leq (\abs{W'}-1)\exp\bigl((\abs{W'}-1)a\bigr),
         \end{align*}
         and
         \begin{align*}
            \sum_{\tau_{i}:(n\abs{W'}+1)\leq\tau_{i}\leq t}\exp\bigl(g(\tau_{i},t)\bigr)&= \sum_{\tau_{i}:(n\abs{W'}+1)\leq\tau_{i}\leq t}\exp\bigl(-0\cdot\varepsilon+g(\tau_{i}+0\cdot\abs{W'},t)\bigr)\nonumber\\
            &\leq \sum_{\tau_{i}:(n\abs{W'}+1)\leq\tau_{i}\leq t}\exp\bigl((\abs{W'}-2)a\bigr)\nonumber\\
            &\leq (t-(n\abs{W'}+1))\exp\bigl((\abs{W'}-2)a\bigr)\nonumber\\
            &\leq (\abs{W'}-2)\exp\bigl((\abs{W'}-2)a\bigr).
         \end{align*}
         Consequently,
         \begin{align}
         \label{e:proof21}
            \sum_{i=0}^{\Ntsigma}\exp\bigl(g(\tau_{i},t)\bigr)
            &\leq (\abs{W'}-1)\exp\bigl((\abs{W'}-1)a\bigr)\cdot(1+\exp(-\varepsilon)+\cdots+\exp(-n\varepsilon))\nonumber\\
            &\hspace*{-1.5cm}+ (\abs{W'}-2)\exp\bigl((\abs{W'}-2)a\bigr)\cdot(1+\exp(-\varepsilon)+\cdots+\exp(-(n-1)\varepsilon))\nonumber\\
            &\hspace*{-1.5cm}\leq (\abs{W'}-1)\exp\bigl((\abs{W'}-1)a\bigr)\cdot\frac{1-\exp(-(n+1)\varepsilon)}{1-\exp(-\varepsilon)}\nonumber\\
            &\hspace*{-1.2cm}+ (\abs{W'}-2)\exp\bigl((\abs{W'}-2)a\bigr)\cdot\frac{1-\exp(-n\varepsilon)}{1-\exp(-\varepsilon)}\nonumber\\
            &\hspace*{-1.5cm}<\frac{1}{1-\exp(-\varepsilon)}\Bigl((\abs{W'}-1)\exp\bigl((\abs{W'}-1)a\bigr)+(\abs{W'}-2)\exp\bigl((\abs{W'}-2)a\bigr)\Bigr).
         \end{align}
         Recall that the sets $\P_{S}$ and $\P_{U}$ are finite. Therefore, ii) holds. This completes our proof for Theorem \ref{t:mainthm}.
		\end{proof}
    
 


\end{document}